\numberwithin{equation}{section}
\newtheorem{theorem}{Theorem}[section]
\newtheorem{proposition}{Proposition}
\newtheorem{remark}{Remark}
\newproof{proof}{\textbf{Proof}}
\newtheorem{lemma}{Lemma}
\newtheorem{corollary}{Corollary}
\begin{document}

\begin{frontmatter}

\title{On BC-trees and BC-subtrees}

\author[Hongbo1]{Yu~Yang}
\ead{yangyugdzs@gmail.com}

\author[Deqiang]{Deqiang Wang}
\ead{dqwang@dlmu.edu.cn}

\author[Hwang]{Hua Wang}
\ead{hwang@georgiasouthern.edu}

\author[Hongbo1,Hongbo2]{Hongbo~Liu\corref{cor1}}
\ead{holiu@ucsd.edu}

\cortext[cor1]{Corresponding author: Tel: +1 858 869 9279, Fax: +1 858 822 7556}


\address[Hongbo1]{School of Information, Dalian Maritime University, Dalian 116026, China}
\address[Deqiang]{Navigation College, Dalian Maritime University, Dalian 116026, China}
\address[Hwang]{Mathematical Sciences, Georgia Southern University, Statesboro, GA 30460, USA}
\address[Hongbo2]{Institute for Neural Computation, University of California San Diego, La Jolla, CA 92093, USA}




\begin{abstract}
A BC-tree (block-cutpoint-tree) is a tree (with at least two vertices) where the distance between any two leaves is even. Motivated from the study of the ``core'' of a graph, BC-trees provide an interesting class of trees. We consider questions related to BC-trees as an effort to make modest progress towards the understanding of this concept. Constructive algorithms are provided for BC-trees with given order and number of leaves whenever possible. The concept of BC-subtrees is naturally introduced. Inspired by analogous work on trees and subtrees, we also present some extremal results and briefly discuss the ``middle part'' of a tree with respect to the number of BC-subtrees.
\end{abstract}

\begin{keyword}
block-cutpoint-tree \sep bicolorable tree \sep path \sep star \sep middle part

\end{keyword}

\end{frontmatter}


\section{Introduction}
\label{Section:Introduction}

The {\em core of a graph} $G$, $C(G)$,  was first introduced by Dulmage and Mendelsohn \cite{dulmage1958coverings,dulmage1959coverings} as part of their theory on decomposition of bipartite graphs. If $T$ is a tree and $T=C(T)$, then $T$ is a {\em BC-tree}, also known as the {\em block-cutpoint-tree} or the {\em bicolorable tree}, introduced by Harary, Plummer, and Prins \cite{har67,harary1966block}. A subtree of $T$ that is also a BC-tree is called as a {\em BC-subtree} of $T$.

It is known that a BC-tree, generally denoted by $T_{BC}$, possesses the interesting condition that the distance between any two leaves is even. In relation to a wide variety of subjects, $T_{BC}$ has a unique minimum
cover and $T_{BC}$ is the block-cutpoint-tree of some connected graphs. Two connected graphs have the same block-graph and the same cutpoint-graph if and only if they have the same block-cutpoint-tree. Barefoot \cite{barefoot2002block} showed that a tree has a BC-tree partition if and only if it does not have a perfect matching. Recently, various concepts and algorithms related to BC-tree partitions were also presented. For instance, there exists a maximum proper partial 0-1 coloring in BC-tree such that the edges colored by 0 form a maximum matching \cite{mkr06}.
It has promising potentials not only in the field of mathematics \cite{heath1999stack,AGGL2009}, but also in information science \cite{KeithPaton1971,EM2006,doerr2008directed,christou2012computing} and chemistry \cite{nakayama1983,barnard1991comparison}.

The number of subtrees of a tree, first studied in \cite{sze05}, is one of the graph invariants of trees that received much attention. Particularly, the extremal trees among certain category of trees that minimize or maximize this number have been vigorously explored. As a related concept, the number of subtrees containing at least some leaf has also been studied and shown to be related to the bound of ``acceptable residue configurations'' \cite{knu}. The analogue of such studies on BC-subtrees, however, seems to have eluded our attention.

Corresponding to the number of substructures or distances, various ``middle parts'' of a tree have received attention as an effort to understand the difference and similarity between different graph invariants. See for instance \cite{sze05,adam,barefoot,jordan}. The introduction of BC-subtrees inspires the concept of {\it BC-subtree-core} as the set of vertices contained in most BC-subtrees.

In Section \ref{sec:leafnumber}, we provide constructive algorithms that generate BC-trees with given order and number of leaves (whenever possible). In Section \ref{sec:Extremaltrees}, we consider the extremal trees with respect to the number of BC-subtrees or leaf-containing BC-subtrees. The focus is then turned to the ``middle part'' of a tree with respect to the number of BC-subtrees in Section~\ref{sec:mid}. We conclude with some comments and questions in Section \ref{Sec:Conclusion}.

\section{Constructing BC-trees with given number of leaves}
\label{sec:leafnumber}
Harary and Plummer \cite{har67} established the upper and lower bounds for the cardinality of the core of any graph $G$ with $p$ vertices that has a non-empty core. The authors further illustrated that for any integer $r$ with in the above bounds, there exists a graph $G$ with $p$ vertices and $r$ ``lines'' (edges) in its core and presented the specific constructions. Restricting our attention to BC-trees, first note that all leaves of a BC-tree belong to the same of the two independent sets of this bipartite graph (since the distance between any pair of leaves is even). Let a {\em $l$-BC-tree} denote a BC-tree with $l$ leaves, it is easy to see the following.

\begin{proposition}
There is no 2-BC-tree on $p$ vertices if $p$ is even.
\label{theorem:no-2-BC-tree}
\end{proposition}

The next observation asserts that a BC-tree cannot have exactly two internal vertices.

\begin{proposition}
There exists no $(p-2)$-BC-tree on $p$ vertices.
\label{theorem:no-p-2-BC-tree}
\end{proposition}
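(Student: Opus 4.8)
The plan is to argue directly from the structure forced by having exactly two internal vertices. Since a $(p-2)$-BC-tree would be a tree on $p$ vertices with $p-2$ leaves, and leaves are precisely the degree-one vertices, such a tree has exactly two non-leaf (internal) vertices. So I would first pin down the global shape of any tree with this leaf count, and then exhibit two leaves whose distance is odd, contradicting the defining property that every pair of leaves lies at even distance.

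First I would record the counting observation: a tree on $p$ vertices has $p-1$ edges and degree sum $2(p-1)$, so a tree with $p-2$ leaves has exactly two vertices of degree at least $2$; call them $u$ and $v$. Next I would argue that $u$ and $v$ must be adjacent. Consider the unique $u$--$v$ path in the tree; every interior vertex of this path has two neighbours on the path and hence degree at least $2$, so it would be an additional internal vertex. Since $u$ and $v$ are the only internal vertices, the path has no interior vertex, i.e.\ $uv$ is an edge. Consequently every remaining vertex is a leaf attached to $u$ or to $v$, so the tree is a double star: an edge $uv$ together with $a\ge 1$ leaves at $u$ and $b\ge 1$ leaves at $v$, where $a+b=p-2$. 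The bounds $a,b\ge 1$ follow because each of $u,v$ has degree at least $2$ but only one of its incident edges, namely $uv$, can go to another internal vertex, so each must carry at least one leaf.

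The contradiction is then immediate: choosing a leaf $x$ adjacent to $u$ and a leaf $y$ adjacent to $v$, the unique path from $x$ to $y$ is $x,u,v,y$, so $d(x,y)=3$ is odd, violating the BC-tree condition. Finally I would dispose of the degenerate small cases $p\le 3$, where $p-2\le 1$: any tree with at least two vertices has at least two leaves, so a tree with at most one leaf on $p$ vertices does not exist and the claim holds vacuously. I expect the only point requiring care to be the structural step forcing $u$ and $v$ to be adjacent and each to bear a leaf; once the double-star shape and the inequalities $a,b\ge 1$ are established, the odd-distance contradiction finishes the argument.
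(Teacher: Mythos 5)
Your proof is correct and takes essentially the same approach as the paper: identify the two internal vertices forced by the leaf count, and derive a contradiction from the distance-3 pair of leaves attached to different internal vertices. You simply make explicit the steps the paper leaves implicit (that the two internal vertices must be adjacent and that each must carry at least one leaf), which is a reasonable amount of added rigor but not a different argument.
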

\begin{proof}
Assume (for contradiction) that there exists a $(p-2)$-BC-tree $T_{BC}$ on $p$ vertices. Then all other vertices are leaves adjacent to one of the two internal vertices. The distance between two such leaves with different internal neighbors is 3.
\end{proof}

Consequently, Propositions~\ref{theorem:no-2-BC-tree} and \ref{theorem:no-p-2-BC-tree} immediately imply the following.
\begin{corollary}
Let $T$ be a BC-tree of order $p\geq 3$ and $l(T)$ the number of leaves of $T$. Then $l(T) \neq p-2$ and
\begin{itemize}
\item $2\leq l(T)\leq p-1$ when p is odd;
\item $3\leq l(T)\leq p-1$ when p is even.
\end{itemize}
\label{corollary:leaf-bounds}
\end{corollary}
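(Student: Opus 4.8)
The plan is to assemble the stated range from three ingredients: two elementary facts valid for all trees, together with the two preceding propositions. Because the result is claimed to follow immediately, the argument should be short, and the only real care lies in handling the parity of $p$ separately for the lower bound.

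First I would record the generic tree bounds, which make no use of the BC-property. Every tree on at least two vertices has at least two leaves, so $l(T)\geq 2$ in all cases. For the upper bound, a counting argument on the degree sum (a tree on $p$ vertices has $p-1$ edges, hence degree sum $2(p-1)$) shows that when $p\geq 3$ the tree must contain at least one vertex of degree $\geq 2$; therefore not every vertex is a leaf and $l(T)\leq p-1$. This already yields $2\leq l(T)\leq p-1$. Next I would invoke Proposition~\ref{theorem:no-p-2-BC-tree} verbatim to exclude $l(T)=p-2$, a step independent of the parity of $p$. Finally I would sharpen the lower bound in the even case: Proposition~\ref{theorem:no-2-BC-tree} forbids a $2$-BC-tree when $p$ is even, so there $l(T)\neq 2$ and hence $l(T)\geq 3$; when $p$ is odd no such obstruction exists and the bound $l(T)\geq 2$ remains (indeed it is attained by the path $P_p$, whose two leaves lie at the even distance $p-1$). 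Collecting these facts reproduces the two displayed ranges along with $l(T)\neq p-2$.

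The main obstacle is essentially bookkeeping rather than mathematics: one must apply the exclusion of Proposition~\ref{theorem:no-2-BC-tree} only in the even case, and must take care to justify the generic inequalities $2\leq l(T)\leq p-1$ without inadvertently leaning on the BC-hypothesis, since they hold for every tree of order $p\geq 3$.
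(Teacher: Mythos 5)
Your proposal is correct and matches the paper's intended argument: the paper simply declares the corollary to follow ``immediately'' from Propositions~\ref{theorem:no-2-BC-tree} and~\ref{theorem:no-p-2-BC-tree}, and your proof supplies exactly the elementary tree facts ($2\leq l(T)\leq p-1$ for any tree of order $p\geq 3$) that the paper leaves implicit, then applies the two propositions in the same way. Nothing is missing and nothing differs in substance.
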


In fact, the conditions in Corollary~\ref{corollary:leaf-bounds} are also sufficient for the existence of a BC-tree. The constructive algorithms are provided below and illustrated in Fig.~\ref{fig:constructbc}.

\begin{theorem}
There exists an $r$-BC-tree of order $p$ if and only if $r \neq p-2$ and
\begin{itemize}
\item $2\leq r \leq p-1$ when p is odd;
\item $3\leq r \leq p-1$ when p is even.
\end{itemize}
\end{theorem}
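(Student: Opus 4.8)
The forward (necessity) direction is already contained in Corollary~\ref{corollary:leaf-bounds}, so the plan is to establish sufficiency: for every admissible pair $(p,r)$ I would exhibit an explicit $r$-BC-tree on $p$ vertices. It is convenient to record the data in terms of the number of internal (non-leaf) vertices $q := p-r$; the admissibility conditions then read $q \neq 2$, $q \ge 1$, and $q \le p-2$ when $p$ is odd, $q \le p-3$ when $p$ is even. The structural fact I would exploit is that a tree is a BC-tree exactly when all of its leaves lie in a single class of the bipartition; equivalently, after properly $2$-colouring the vertices black and white, every leaf receives the same colour.

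The first family handles every pair with $p-r$ odd. Here I would take the \emph{spider} (generalized star) $S$ consisting of a single centre together with $r$ pendant paths (\emph{legs}), each of odd length, whose lengths sum to $p-1$. Since a sum of $r$ odd numbers is congruent to $r$ modulo $2$, such a choice with total length $p-1$ exists precisely when $r \le p-1$ and $p-r$ is odd; the extreme case of all legs of length $1$ is just the star $K_{1,p-1}$, giving $r=p-1$. Every leaf of $S$ is the endpoint of an odd leg, so any two leaves are at distance equal to a sum of two odd numbers, hence even, and $S$ has exactly $r$ leaves and order $p$.

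The second family handles every pair with $p-r$ even, i.e.\ even $q \ge 4$. Since a path on an even number of vertices has its two endpoints in opposite colour classes, a plain path skeleton fails here, and this mismatch is the crux of the construction. To repair it I would build a skeleton $S'$ on $q$ vertices all of whose leaves are white: take a path $u_1u_2\cdots u_{q-1}$ on the odd number $q-1$ of vertices (so $u_1$ and $u_{q-1}$ share a colour) and attach a single pendant $w$ to $u_2$, colouring so that $u_1,u_{q-1},w$ are all white. One then attaches the $r$ pendant (black) leaves to white vertices of $S'$, assigning at least one to each of $u_1,u_{q-1},w$ so that these become internal; this is possible because $r \ge 3$ in every admissible even-$q$ case. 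The result is a tree of order $q+r=p$ with exactly $r$ leaves, all black, hence a BC-tree.

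I expect the main obstacle to be the even-$q$ (equivalently, same-parity) case together with the bookkeeping that the two families cover \emph{exactly} the admissible set and nothing more. Concretely I would check that $q=1$ gives the star, that odd $q \ge 3$ and even $q \ge 4$ are covered by the first and second family respectively, and that the excluded value $q=2$ is genuinely unrealizable — which fits the framework cleanly, since a two-vertex skeleton is a single edge whose endpoints necessarily receive opposite colours and so cannot both be white. Matching the lower bounds $r \ge 2$ (odd $p$) and $r \ge 3$ (even $p$) against the minimum number of pendants each construction requires is the remaining routine verification.
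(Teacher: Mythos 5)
Your proof is correct, but it takes a genuinely different route from the paper's. The paper establishes sufficiency with two iterative procedures (Algorithm~\ref{Algorithm:Constructing1} and Algorithm~\ref{Algorithm:Constructing2}): starting from the star $K_{1,p-1}$, a local edge-rearrangement move repeatedly detaches two leaves from the center and hangs them as a pendant path of length two off a fixed vertex, producing $l$-BC-trees for $l = p-1, p-3, p-4, \ldots, \lceil\frac{p-1}{2}\rceil$ (the jump from $p-1$ to $p-3$ reflecting the impossibility of $l=p-2$); a second algorithm then concatenates these pendant paths to drive the leaf count down to $2$ (odd $p$) or $3$ (even $p$). The admissible range is thus swept out one value at a time, each intermediate tree being a BC-tree. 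You instead exhibit a closed-form witness for each admissible pair $(p,r)$, split by the parity of $q = p-r$: a spider with all legs of odd length when $q$ is odd, and, when $q$ is even (hence $q \geq 4$), a skeleton consisting of an odd path plus one pendant, whose three same-coloured leaves each absorb at least one of the $r$ attached pendants. Your bipartition criterion (all leaves in one colour class) is exactly the fact the paper records at the start of Section~\ref{sec:leafnumber}, and your parity bookkeeping is sound --- in particular the observation that $r \geq 3$ holds automatically in every admissible even-$q$ case, and your coloring explanation of why $q=2$ fails matches Proposition~\ref{theorem:no-p-2-BC-tree}. What your approach buys is a shorter, self-contained existence proof with explicit extremal witnesses; what the paper's buys is an incremental generation scheme of computational interest (its own Remark concedes that direct constructions are ``straightforward'' but that the algorithms matter ``to the computational point of view''), producing a nested family of BC-trees rather than one tree per parameter pair.
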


\begin{algorithm*}
\caption{Constructing $l$-BC-trees ($l = p - 1, p - 3, p - 4,\ldots,\lceil\frac{p-1}{2}\rceil$) on $p$ vertices}
\label{Algorithm:Constructing1}
\begin{algorithmic}[1]
\STATE Initialize with $T_D$ denoting a star centered at $u_0$ with $p-1$ leaves, one of which labeled as $u_{p-1}$. Let $T_\Delta$ be a set and $T_\Delta =\{T_D\}$.

\STATE If $d_{T_D } (u_0 ) = 2$ or $d_{T_D } (u_0 ) = 3$,  go to Step \ref{Algorithm:Constructing1:Output}.

\STATE Reorganization. \label{Algorithm:Constructing1:Reorganization}

\BODY

\STATE Choose two neighbors $q,r \neq u_{p-1}$ of $u_0$.

\STATE $T_D := T_D - u_0q - u_0r + u_{p-1}q + qr$ and $T_\Delta :=T_\Delta \cup \{T_D\}$.

\ENDBODY

\STATE If $d_{T_D } (u_0 ) \leq3$, go to Step \ref{Algorithm:Constructing1:Output}. Otherwise, go to Step \ref{Algorithm:Constructing1:Reorganization}.

\STATE Output the tree set $T_\Delta$. \label{Algorithm:Constructing1:Output}
\end{algorithmic}
\end{algorithm*}

\begin{algorithm*}
\caption{Constructing $l$-BC-trees ($ l= \lceil\frac{p-1}{2}\rceil- 1,\ldots,2 $ when $p$ is odd,
$l=\lceil\frac{p-1}{2}\rceil- 1,\ldots,3$ when $p$ is even) on $p$ vertices}
\label{Algorithm:Constructing2}
\begin{algorithmic}[1]
\STATE Initialize with the last $T_D$ generated from the previous algorithm with $\lceil\frac{p-1}{2}\rceil$ leaves. Let $T_\delta$ be a set and $T_\delta =\{T_D\}$, choose a neighbor $u_1 \neq u_{p-1}$ of $u_0$. For
simplification, Let $m^*$ be the reference vertex and $m^*=u_1$.

\STATE If $d_{T_D } (u_{p-1} ) = 2$, go to Step \ref{Algorithm:Constructing2:Output}.

\STATE Reorganization. \label{Algorithm:Constructing2:Reorganization}

\BODY

\STATE Choose a pendant path $< u_{p - 1}, q, r >$ with $q \neq u_0$.

\STATE $T_D := T_D - u_{p-1}q + m^*q$, let $T_\delta :=T_\delta \cup \{T_D\}$ and $m^*=r$.

\ENDBODY

\STATE If $d_{T_D } (u_{p-1} ) = 2$, go to Step \ref{Algorithm:Constructing2:Output}. Otherwise, go to Step \ref{Algorithm:Constructing2:Reorganization}.

\STATE Output the tree set $T_\delta$. \label{Algorithm:Constructing2:Output}
\end{algorithmic}
\end{algorithm*}

\begin{figure*}[htb]
  \centering
  \subfigure[Illustration of the procedures for constructing $l$-BC-trees on $p$ vertices ($r = p - 1,p - 3,p - 4,\ldots,\lceil\frac{p-1}{2}\rceil$)]{
    \label{fig:constructbc_a} 
    \includegraphics[width=0.85\textwidth]{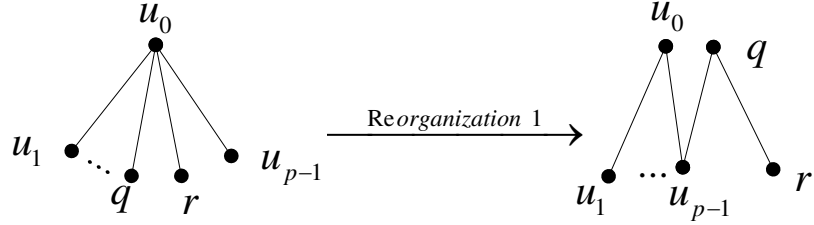}}
  \subfigure[Illustration of the procedures for constructing $l$-BC-tree ($
l=\lceil\frac{p-1}{2}\rceil- 1,\ldots,2
$)~($p$ is odd) or $l$-BC-tree ($l=\lceil\frac{p-1}{2}\rceil- 1,\ldots,3$)~($p$ is even) on $p$ vertices]{
    \label{fig:constructbc_b} 
    \includegraphics[width=0.9\textwidth]{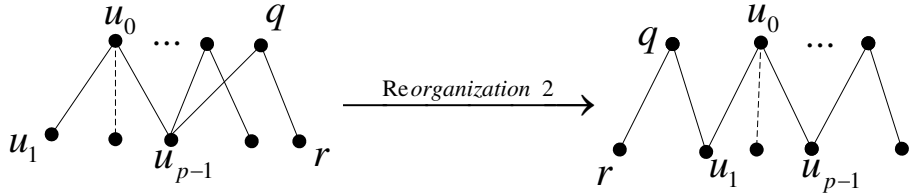}}
  \caption{Construction of $l$-BC-trees on $p$ vertices}
  \label{fig:constructbc} 
\end{figure*}

\begin{remark}
Although it is also straightforward to define such BC-trees of given order that obtain different number of leaves, it is of interests to the computational point of view to provide the above constructive algorithms.
\end{remark}

\section{Extremal trees with respect to the number of BC-subtrees}
\label{sec:Extremaltrees}

It is well known that among general trees of given order, the star maximizes the number of subtrees and the path minimizes this number (see for instance, \cite{sze05}). The explicit formulas for these numbers can also be easily obtained.
\begin{lemma}(\cite{sze05})
The path $P_n$  has ${n+1 \choose 2}$ subtrees, fewer than any other trees of $n$ vertices. The star $K_{1,n-1}$ has $2^{n-1}+n-1$  subtrees, more than any other trees of $n$ vertices.
\label{lemma:szely-wang}
 \end{lemma}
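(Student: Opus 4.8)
The plan is to separate the exact enumeration from the extremal comparison. First, the two closed forms follow by direct counting. A subtree of $P_n$ is itself a path, uniquely specified by its unordered pair of endpoints (the two coinciding for a single-vertex subtree), so there are $\binom{n}{2}+n=\binom{n+1}{2}$ of them. For $K_{1,n-1}$ I split the subtrees according to whether they contain the center: those that do are in bijection with the $2^{n-1}$ subsets of the leaf set, and those that do not are exactly the $n-1$ single leaves, for a total of $2^{n-1}+n-1$.

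For the extremal statements, write $f(T)$ for the number of subtrees of $T$ and, for a vertex $v$, $F_T(v)$ for the number of subtrees of $T$ that contain $v$. The engine I would use is a reattachment identity. Let $H$ be a tree with two chosen vertices $a,b$, let $S$ be a rooted tree with root $y$, and let $T_a$ (resp.\ $T_b$) be obtained by joining $y$ to $a$ (resp.\ $b$) with a single edge. Classifying the subtrees of $T_a$ as lying entirely in $H$, lying entirely in $S$, or using the bridge edge (hence pairing a subtree of $H$ through $a$ with a subtree of $S$ through $y$) gives
\[
f(T_a)=f(H)+f(S)+F_H(a)\,F_S(y),
\]
and the same formula with $b$, whence
\[
f(T_a)-f(T_b)=\bigl(F_H(a)-F_H(b)\bigr)\,F_S(y).
\]
Since $F_S(y)\ge 1$, relocating the branch $S$ toward the endpoint with the larger value of $F_H$ strictly increases $f$, and toward the smaller value strictly decreases it, whenever $F_H(a)\neq F_H(b)$.

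I would then drive each extremum by repeated relocation. For the maximum, as long as $T$ is not a star (equivalently, its diameter is at least $3$) I peel a pendant branch and reattach it at a vertex maximizing $F_H$; the identity forces $f$ to increase strictly, and iterating concentrates all branches at one vertex, leaving $K_{1,n-1}$. For the minimum, as long as $T$ is not a path I take a longest path, peel a branch off one of its internal vertices, and reattach it at that path's leaf-endpoint, a minimizer of $F_H$, so $f$ decreases strictly and the process terminates at $P_n$. The strictness in the identity then also yields that $K_{1,n-1}$ and $P_n$ are the \emph{unique} maximizer and minimizer.

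The main obstacle is making these iterations rigorous: at each non-extremal tree I must exhibit a branch whose relocation strictly changes $f$ in the desired direction while strictly decreasing a suitable complexity measure (for instance the number of vertices of degree at least $3$, or the distance to the star/path), so that termination at the claimed tree is guaranteed. The delicate analytic input is the monotone behaviour of $F_H$ — that it increases as one moves inward from a leaf and is largest at a central vertex — which is what legitimises the choice of target $a$. The base comparison is elementary: using the rooted product rule $F_H(v)=\prod_{c}\bigl(1+F(c)\bigr)$ over the children $c$ in a rooting of $H$ at $v$, a leaf contributes a single factor while an internal vertex contributes at least two, and concretely a leaf $u_0$ with neighbor $u_1$ satisfies $F_H(u_1)-F_H(u_0)=F_{H-u_0}(u_1)-1>0$ once $H$ has at least three vertices; propagating this comparison along arbitrary root-to-leaf paths is the step I expect to require the most care.
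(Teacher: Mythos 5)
The paper itself contains no proof of this lemma --- it is quoted from \cite{sze05} --- so your proposal must be judged on its own merits rather than against an internal argument. Your two closed-form counts are correct, and so is your reattachment identity $f(T_a)-f(T_b)=\bigl(F_H(a)-F_H(b)\bigr)F_S(y)$. The extremal half, however, has two genuine gaps. First, the ``delicate analytic input'' you defer is not a technicality: it is precisely the strict quasi-concavity of $v\mapsto F_H(v)$ along paths, essentially the content of Proposition~\ref{prop:sze} quoted in this paper from \cite{sze05}, and it is a theorem requiring its own proof. Your product-rule computation establishes only the base step (a leaf versus its own neighbour), and your stated form of the input --- that $F_H$ ``increases as one moves inward from a leaf'' --- is false in general: $F_H$ increases only until the subtree-core and then decreases, so ``inward'' cannot mean along an arbitrary path. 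What your minimization step really needs is that an internal vertex $v_i$ of a longest path $v_0\cdots v_d$ of $H$ satisfies $F_H(v_i)>\min\{F_H(v_0),F_H(v_d)\}$, and then you must reattach at whichever endpoint attains that minimum (the endpoint of a longest path need not be a global minimizer of $F_H$); this is exactly quasi-concavity, unproven in your write-up. Second, the maximization iteration can stall: relocating a pendant branch to a maximizer of $F_H$ is strictly profitable only if the branch's current attachment vertex $b$ is \emph{not} itself a maximizer of $F_H$, and you never show that every non-star tree possesses a pendant branch with this property. Without that, neither termination at $K_{1,n-1}$ nor the strictness claims (``more/fewer than any other tree'') are established.

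Both gaps are repairable, and in fact your own identity gives a cleaner route to the maximum than the iteration. If $T$ is not a star and $n\ge 4$, it has an internal edge $xy$; let $X\ni x$ and $Y\ni y$ be the components of $T-xy$, with $|X|=s$ and $|Y|=n-s$, where $2\le s\le n-2$. Your identity gives $f(T)=f(X)+f(Y)+F_X(x)F_Y(y)$; since a subtree of a tree is determined by its vertex set, $F_X(x)\le 2^{s-1}$ and $F_Y(y)\le 2^{n-s-1}$, so induction on $n$ yields $f(T)\le 2^{s-1}+2^{n-s-1}+2^{n-2}+n-2\le 2^{n-3}+2+2^{n-2}+n-2<2^{n-1}+n-1$ for $n\ge 4$ --- no iteration, no quasi-concavity, and strictness for free. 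For the minimum, the argument this paper uses for Theorem~\ref{theorem:path-BC-subtree} adapts verbatim: distinct pairs of vertices give distinct path-subtrees, so $f(T)\ge n+\binom{n}{2}=\binom{n+1}{2}$, and if $T$ is not a path, then a vertex of degree at least $3$ together with three of its neighbours is a subtree that is not a path, forcing strict inequality. I would replace the relocation scheme by these two arguments, or else supply full proofs of quasi-concavity and of the existence of a profitably relocatable branch at every non-star tree.
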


\subsection{The extremality of star and path}

Turning our attention to BC-subtrees, denote by $\eta (T)$
 (resp. $\eta_{BC} (T)$ ) the number of subtrees (resp. BC-subtrees) of $T$ and $\eta_{BC} (T,v)$ the number of BC-subtrees of $T$ containing $v$. In what follows we show the path and star, as one would expect, are also extremal with respect to the number of BC-subtrees.
\begin{theorem}The star
$K_{1,n-1}$ has $2^{n-1}-n$ BC-subtrees, more than any other trees on $n$ vertices.
\label{theorem:star-BC-subtree}
\end{theorem}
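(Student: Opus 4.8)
The plan is to reduce the problem to a purely combinatorial count on subsets of one colour class of the bipartition, and then to use convexity to see that the split forced by the star is optimal.

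First I would record the exact count for the star. Every subtree of $K_{1,n-1}$ either omits the centre (hence is a single leaf, not a BC-tree since it has one vertex) or contains the centre, in which case it is a substar $K_{1,k}$ whose $k$ chosen leaves are exactly its own leaves; such a substar is a BC-tree precisely when $k\ge 2$ (for $k=1$ it is a single edge, whose two leaves are at odd distance $1$). Summing over the admissible leaf-subsets gives
\[
\eta_{BC}(K_{1,n-1})=\sum_{k=2}^{n-1}\binom{n-1}{k}=2^{n-1}-(n-1)-1=2^{n-1}-n,
\]
which establishes the first assertion.

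For the extremal statement I would first isolate the structural fact that drives everything: because $T$ is a tree, the unique path between any two vertices of a subtree $S$ already lies inside $S$, so distances measured in $S$ agree with distances in $T$. Fixing the proper $2$-colouring (bipartition) $(A,B)$ of $T$, two vertices are at even distance iff they receive the same colour. Hence a subtree $S$ with at least two vertices is a BC-subtree \emph{iff all of its leaves receive the same colour}. This is the key lemma, and it converts the count of BC-subtrees into a count of monochromatic-leaf subtrees. I would then set up an injection: map each BC-subtree $S$ to its leaf set $L(S)$. In a tree, every internal vertex of $S$ lies on a path between two leaves of $S$, so $S$ is exactly the minimal subtree spanning $L(S)$; therefore $S\mapsto L(S)$ is injective. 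By the lemma $L(S)$ lies in a single colour class, and since $S$ has at least two vertices, $|L(S)|\ge 2$. Writing $a=|A|$, $b=|B|$ with $a+b=n$, this yields
\[
\eta_{BC}(T)\le \bigl(2^{a}-a-1\bigr)+\bigl(2^{b}-b-1\bigr)=2^{a}+2^{b}-n-2,
\]
the right-hand side counting all subsets of size $\ge 2$ inside $A$ or inside $B$.

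Finally I would optimise the bound. Since $x\mapsto 2^{x}$ is convex, $2^{a}+2^{b}$ subject to $a+b=n$ and $a,b\ge 1$ is maximised at the extreme split $\{a,b\}=\{1,n-1\}$, giving $2^{a}+2^{b}\le 2^{n-1}+2$ and hence $\eta_{BC}(T)\le 2^{n-1}-n$. For equality the split must be $\{1,n-1\}$; but a colour class of size $1$ forces a single vertex adjacent to everything (the other class being independent), i.e. $T=K_{1,n-1}$. Thus every non-star tree satisfies the inequality strictly. The most delicate points to get right are the injectivity step — arguing carefully that a subtree is recovered from its leaf set via the spanning-subtree (Steiner) property — and the observation that for a general tree the estimate is only an inequality (the injection need not be onto, and the extreme split is unattainable), which is exactly what makes the star the unique maximiser.
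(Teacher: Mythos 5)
Your proof is correct, and it takes a genuinely different route from the paper's. The paper's argument is a short subtraction that leans on prior work: by Lemma~\ref{lemma:szely-wang} the star maximizes the total number of subtrees, and since the $n$ single-vertex and $n-1$ single-edge subtrees are never BC-subtrees, $\eta_{BC}(T)\le \eta(T)-(2n-1)\le (2^{n-1}+n-1)-(2n-1)=2^{n-1}-n$; strictness for non-stars follows by exhibiting just one additional non-BC subtree, namely a path of length $3$, which every non-star contains. Your argument is instead self-contained: the key lemma that a subtree $S$ (with at least two vertices) is a BC-subtree iff its leaves are monochromatic in the bipartition of $T$, the injection $S\mapsto L(S)$ justified by the Steiner-tree recovery of $S$ from its leaf set, and convexity of $x\mapsto 2^x$ to show the split $\{1,n-1\}$ is optimal, with a size-one colour class forcing $T=K_{1,n-1}$. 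Each approach has its merits: the paper's is essentially a two-line deduction but requires the Sz\'ekely--Wang extremal result as input, whereas yours needs no external lemma and proves the sharper intermediate bound $\eta_{BC}(T)\le 2^{|A|}+2^{|B|}-n-2$ in terms of the partite set sizes, which is considerably stronger for balanced trees and also makes the uniqueness of the maximiser transparent (only the star has a colour class of size one). Your care about the two delicate points — injectivity via the spanning property, and the fact that the injection need not be onto — is exactly what keeps the argument sound.
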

\begin{proof}
By definition, It is not difficult to obtain that
$$ \eta_{BC}(K_{1,n-1}) = 2^{n-1}-n , $$
i.e., all subtrees of $K_{1,n-1}$ except for the single-vertex and two-vertex subtrees.

For any tree $T$ with $n$ vertices, all the single-vertex or two-vertex subtrees are evidently not BC-trees. Hence
\begin{align*}
 \eta_{BC} (T) & \le \eta (T) - (|V(T)| + |E(T)|) = \eta (T) - (2n - 1) \\
  & \le (2^{n - 1}  + n - 1) - (2n-1) = \eta_{BC}(K_{1,n-1})
\end{align*}
by Lemma \ref{lemma:szely-wang}.

Furthermore, if $T$ is not the star $K_{1,n - 1}$, then there is at least one path $P$ of length 3 in $T$. This path $P$ is a non-BC-subtree of $T$ with more than two vertices and hence
\begin{equation}
\eta_{BC} (T) \le \eta (T) - (|V(T)| + |E(T)|) - 1 < \eta_{BC} (K_{1,n - 1} ).
 \nonumber
\end{equation}
\end{proof}

\begin{theorem}The number of BC-subtrees of the path $P_n$ is
\[\eta_{BC} (P_n )=\begin{cases}
{n(n-2)/ 4}&n\equiv0(mod~2),\\
(n-1)^2/ 4&n\equiv1(mod~2),
\end{cases} \]
less than that of any other $n$-vertex tree.
\label{theorem:path-BC-subtree}
\end{theorem}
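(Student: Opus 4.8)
The plan is to treat the two assertions separately: first the closed formula for $\eta_{BC}(P_n)$, which is mechanical, and then the minimality, which is the substantive part.

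For the formula I would first determine which subtrees of $P_n$ are BC-subtrees. Every subtree of a path is itself a subpath, and a subpath on $k$ vertices has its two endpoints as its only leaves, at distance $k-1$. By the defining parity condition, such a subpath is a BC-tree exactly when $k-1$ is even and $k\ge 2$, i.e. when $k$ is odd and $k\ge 3$. Since $P_n$ contains exactly $n-k+1$ subpaths on $k$ vertices, I would then write
\[
\eta_{BC}(P_n)=\sum_{\substack{3\le k\le n\\ k\ \mathrm{odd}}}(n-k+1)
\]
and evaluate this, separating $n$ even (largest admissible $k$ is $n-1$) from $n$ odd (largest is $n$). The substitution $k=2j+1$ turns each case into $\sum_j (n-2j)$ over a range fixed by the parity of $n$, and a routine arithmetic computation yields $n(n-2)/4$ and $(n-1)^2/4$ respectively.

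For the minimality, the key device I would set up first is the reformulation through the (unique) proper $2$-colouring of a tree: two vertices are at even distance iff they receive the same colour, so a subtree $S$ with at least two vertices is a BC-subtree iff all leaves of $S$ share a colour. This converts the metric condition into a combinatorial one that can be tracked under local surgery. I would then argue by a straightening transformation with induction on the number of branchings. If $T\ne P_n$, root $T$ at a leaf and let $v$ be a deepest vertex of degree $\ge 3$; then every subtree hanging below $v$ is a pendant path, so $v$ carries at least two pendant paths $P=v\,x_1\cdots x_a$ and $Q=v\,y_1\cdots y_b$. Form $T'$ by deleting $vy_1$ and inserting $x_ay_1$, thereby appending $Q$ to the far end of $P$; this is again a tree on $n$ vertices with strictly fewer branchings, so iterating terminates at $P_n$. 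The goal at each step is $\eta_{BC}(T')<\eta_{BC}(T)$. To compare, I would isolate the subtrees that avoid $Q$ entirely: these live on $T-\{y_1,\dots,y_b\}$, which is literally identical in $T$ and $T'$, so they contribute the same count to both. What remains are the subtrees that meet $Q$, and among these the floating subpaths inside $Q$ have BC-status governed only by their vertex count and so also match; the genuine content is the comparison between the subtrees of $T$ that pass through $v$ into both $P$ and $Q$ and the subtrees of $T'$ that cross the new $x_ay_1$ junction.

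The main obstacle is precisely the bookkeeping across that junction, because appending $Q$ to $x_a$ re-colours it: since $v$ and $x_a$ agree in colour iff $a$ is even, the colour pattern along $Q$ is preserved when $a$ is even and globally flipped when $a$ is odd. This flip alters which ``all leaves the same colour'' condition each relocated subtree satisfies, so the matching of BC-subtrees meeting $Q$ in $T$ with their would-be images in $T'$ must be carried out case by case according to the parities of $a$ and $b$. The delicate point will be to show, in every parity case, that the subtrees of $T$ branching at $v$ simultaneously into $P$ and $Q$ produce a strict surplus over anything available in $T'$, where reaching $Q$ now forces traversal of all of $P$. Once this strict monotonicity under a single straightening move is established, iterating it down to $P_n$ yields $\eta_{BC}(T)>\eta_{BC}(P_n)$ for every $T\ne P_n$.
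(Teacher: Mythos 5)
Your treatment of the closed formula for $\eta_{BC}(P_n)$ is correct and is essentially the paper's own computation (counting subpaths on an odd number of vertices). The minimality half, however, has a fatal flaw: the lemma your entire induction rests on --- that the straightening move $T \mapsto T' = T - vy_1 + x_ay_1$ always satisfies $\eta_{BC}(T') < \eta_{BC}(T)$ --- is not only left unproved (you explicitly defer the parity case analysis), it is false. Take $T$ on $6$ vertices with $V(T)=\{v,x_1,y_1,r,z_1,z_2\}$ and edges $vx_1$, $vy_1$, $vr$, $rz_1$, $rz_2$. Rooting at the leaf $z_1$, the deepest vertex of degree at least $3$ is $v$, and the pendant paths below it are the single edges $vx_1$ and $vy_1$ (so $a=b=1$); your move deletes $vy_1$ and inserts $x_1y_1$. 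Direct enumeration gives $\eta_{BC}(T)=8$ but $\eta_{BC}(T')=9$. This is exactly the recoloring effect you flagged as ``delicate'': since $a$ is odd, the relocated branch flips color, so $y_1$ lands in the same bipartition class as $z_1,z_2$, creating three new BC-subtrees (with vertex sets $\{y_1,x_1,v,r,z_1\}$, $\{y_1,x_1,v,r,z_2\}$, and all of $V(T')$) while destroying only two ($\{y_1,v,r\}$ and $\{x_1,y_1,v,r\}$). Rooting at $x_1$ or $y_1$ instead makes $r$ the deepest branch vertex, and that move produces a tree isomorphic to $T'$, so no choice of root or of pendant paths rescues the step. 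The theorem itself is of course still true here ($8 > \eta_{BC}(P_6)=6$); it is the monotonicity of your move that fails, so the induction cannot run.

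For contrast, the paper proves minimality with no surgery and no induction, using your 2-coloring observation globally rather than locally: fix any vertex $u$ of the non-path tree $T$ and let $p,q$ (with $p+q=n$) be the numbers of vertices at even and odd distance from $u$. Every path joining two vertices of the same class is a BC-subtree, giving ${p \choose 2}+{q \choose 2}$ of them, and the closed neighborhood of any vertex of degree at least $3$ (which exists since $T$ is not a path) gives one more that is not a path. A two-line computation shows ${p \choose 2}+{q \choose 2}+1-\eta_{BC}(P_n)$ equals $\frac{(p-q)^2+3}{4}$ for odd $n$ and $\frac{(p-q)^2}{4}+1$ for even $n$, both positive, with the star settled separately by the preceding theorem. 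The lesson of the counterexample above is that the bipartition classes should be counted globally, as the paper does; feeding them into a local exchange argument fails because the move's effect on the BC-subtree count is genuinely non-monotone.
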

\begin{proof}
Again, it is easy to obtain
$$\eta_{BC} (P_n )={n(n - 2)/4} $$
for even $n$ and
$$\eta_{BC} (P_n ) ={(n - 1)^2/4} $$
for odd $n$, i.e., the number of nontrivial subpaths of even length.

Now let $T$ be an $n$-vertex tree that is not a path with $(n \geq 4)$ (the cases for small values of $n$ is trivial).
For any $u \in V(T)$, let
 \[
E_u (T) = \{ v \in V(T)|d(u,v)\;{\rm{is}}\;{\rm{even}}\}
\]
\[
O_u (T) = \{ v \in V(T)|d(u,v)\;{\rm{is}}\;{\rm{odd}}\}
\]
where $d(u,v)$ is the distance between $u$ and $v$.

Let $|E_u (T)| = p$, $|O_u (T)| = q$, and $p+q=n$. It is easy to see that the path between any two vertices in $E_u (T)$ (resp. $O_u (T)$) is a BC-subtree of $T$. Consequently, there are ${p \choose 2}$ BC-subpaths with two endpoints in $E_u (T)$ and ${q \choose 2}$ BC-subpaths with two
endpoints in $O_u (T)$.

Since $T$ is not a path, there is at least one vertex $v$ with degree at least three. The subtree induced by $v$ and its neighbors is also a BC-subtree of $T$, hence
\[
\eta_{BC} (T) \geq {p \choose 2}+{q \choose 2}+ 1.
\]
Note that we have $p \geq 2$ and $q \geq 2$ unless $T$ is a star, in which case $\eta_{BC} (T) > \eta_{BC} (P_n)$ from Theorem~\ref{theorem:star-BC-subtree}. Assuming now $p \geq 2$ and $q\geq 2$,
\begin{itemize}
\item If $n$ is odd, we have
\[
{p \choose 2}+{q \choose 2} + 1 - \frac{{(n - 1)^2 }}{4} = \frac{{(p - q)^2  + 3}}{4} > 0 ;
\]
\item If $n$ is even, we have
\[
{p \choose 2}+{q \choose 2}+ 1 - \frac{{n(n - 2)}}{4} = \frac{{(p - q)^2 }}{4} + 1 > 0.
\]
\end{itemize}
\end{proof}

\subsection{Leaf-containing BC-subtrees}

In the rest of this section, we consider BC-subtrees containing at least one leaf, analogue of the concept of leaf-containing subtrees. We denote by $\eta_{BC}^* (T)$ the number of BC-subtrees of $T$ that contain at least a leaf of $T$.

\begin{theorem}
For any tree $T$ on $n \geq 3$ vertices, we have
$$ n-2 \le \eta_{BC}^*(T) \le 2^{n - 1}  - n $$
with equalities at the upper (lower) bound if and only if $T$ is a star (path).
\label{theorem:starpath-BC-subtree2}
\end{theorem}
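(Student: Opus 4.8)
Upper bound. The plan is to note that a leaf-containing BC-subtree is in particular a BC-subtree, so $\eta_{BC}^*(T)\le\eta_{BC}(T)$, and then to invoke Theorem~\ref{theorem:star-BC-subtree} to bound $\eta_{BC}(T)\le 2^{n-1}-n$. For the equality discussion, equality in the second step forces $T=K_{1,n-1}$ by the maximality (hence uniqueness) in Theorem~\ref{theorem:star-BC-subtree}; conversely every BC-subtree of $K_{1,n-1}$ is a substar $K_{1,k}$ with $k\ge 2$ whose leaves are leaves of the host, so $\eta_{BC}^*(K_{1,n-1})=\eta_{BC}(K_{1,n-1})=2^{n-1}-n$. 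Thus the upper bound holds with equality exactly for the star.

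Lower bound. First I would record that the leaf-containing BC-subtrees of $P_n$ are precisely the even-length subpaths incident to one of the two endpoints (together with the whole path when $n$ is odd); counting these and correcting by inclusion--exclusion gives $\eta_{BC}^*(P_n)=n-2$ in both parities. To prove $n-2$ is a lower bound attained only by the path, I would induct on $n$, the base case $n=3$ being $P_3$. Pick any leaf $u$ with neighbor $x$ and set $T'=T-u$. If $\deg_T(x)\ge 3$, then $x$ stays internal in $T'$, the leaf set only loses $u$, so every leaf-containing BC-subtree of $T'$ remains one in $T$; since $u\!-\!x\!-\!w$ is a leaf-containing $P_3$ for each of the $\ge 2$ neighbours $w\ne u$ of $x$, we gain at least two BC-subtrees through $u$ and obtain $\eta_{BC}^*(T)\ge\eta_{BC}^*(T')+2\ge n-1$, strictly above $n-2$, consistent with $T$ not being a path.

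The delicate case is $\deg_T(x)=2$, where $x$ becomes a \emph{new} leaf of $T'$ so the leaf set changes. Writing $w_2$ for the second neighbour of $x$ and $T''=T-u-x$, I would establish the exact identity
\[
\eta_{BC}^*(T)=\eta_{BC}^*(T')-|A|+|G|,
\]
where $G$ is the family of BC-subtrees of $T$ containing $u$ and $A$ is the family of BC-subtrees of $T'$ whose only leaf of $T'$ is $x$ (the latter are counted in $\eta_{BC}^*(T')$ but cease to be leaf-containing in $T$). Parametrizing both families by the subtree $R$ obtained after deleting the pendant attached at $w_2$, a short parity computation for attaching a pendant vertex shows that $R$ contributes to $G$ exactly when all leaves of $R$ other than $w_2$ lie at even distance from $w_2$, and to $A$ exactly when they all lie at odd distance from $w_2$, $R\ne\{w_2\}$, and $R$ avoids the leaves of $T$.

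The step I expect to be the main obstacle is the resulting inequality $|G|\ge|A|+1$. I would prove it by the injection sending a subtree $R$ of the $A$-family to the subtree obtained by appending to $R$, for every leaf of $R$ other than $w_2$, all of its neighbours in $T''$ farther from $w_2$: this flips each odd leaf into an even one, so the image lies in the $G$-family; it is reversible by deleting the even-distance leaves, hence injective; and it misses $\{w_2\}$, which lies in the $G$-family but (being a two-vertex tree after reattachment) not in the $A$-family. Granting this, the induction yields $\eta_{BC}^*(T)\ge\eta_{BC}^*(T')+1\ge n-2$. For the equality characterization, note that a support vertex of degree $\ge 3$ at any leaf already pushes the count above $n-2$; hence equality forces every leaf into the pendant-path case with $\eta_{BC}^*(T')=n-3$ and $|G|=|A|+1$, so by the induction hypothesis $T'=T-u$ is a path, to whose endpoint $u$ is attached, making $T$ a path.
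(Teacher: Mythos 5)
Your proof is correct, and while the upper-bound half coincides with the paper's (both use Theorem~\ref{theorem:star-BC-subtree} together with the fact that every BC-subtree of a star contains a leaf of it), your lower-bound argument is genuinely different. The paper's proof is a direct, non-inductive count: it splits $V(T)$ into the two partite classes $X$ and $Y$ and counts paths joining two vertices of the same class, at least one of which is a leaf. When both classes contain leaves this immediately gives at least $(n_X-1)+(n_Y-1)=n-2$ leaf-containing BC-subtrees, with equality only when each class has exactly one leaf, i.e.\ for a path; when all leaves lie in $X$, it shows via the chain $n_0\ge n_1\ge\cdots\ge n_s=1$ (where $n_k$ counts vertices at distance $k$ from their nearest leaf) that $n_X\ge\frac{n+1}{2}$, and then the count $(n_X-l_X)l_X+\binom{l_X}{2}=\left(n_X-\frac{l_X+1}{2}\right)l_X\ge n-2$ finishes, with equality again only for a path. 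You instead delete a leaf $u$ and induct on $n$, which forces you to handle the case where the support vertex $x$ has degree $2$ and becomes a new leaf of $T'=T-u$; your exact identity $\eta_{BC}^*(T)=\eta_{BC}^*(T')-|A|+|G|$ and the injection $R\mapsto R^+$ (append to each odd-distance leaf of $R$ all of its farther neighbours in $T''$) are sound: membership in the $A$-family guarantees each such leaf is internal in $T$ and so has a farther neighbour to append, the appended vertices become precisely the even-distance leaves of $R^+$ (so deleting them inverts the map), and $\{w_2\}$ is missed by the image, giving $|G|\ge|A|+1$. (One cosmetic point: when inverting, delete the even-distance leaves \emph{other than} $w_2$, since $w_2$ is at even distance $0$ from itself.) Comparing the two: the paper's global count is shorter and yields the equality characterization in one pass, while your local surgery is longer but entirely elementary, avoids the paper's somewhat delicate monotonicity claim about the $n_k$, and produces a leaf-deletion recursion, $|G|-|A|=\eta_{BC}^*(T)-\eta_{BC}^*(T-u)$, that could be reused for other enumerative or algorithmic questions about leaf-containing BC-subtrees.
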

\begin{proof}
The sharp upper bound simply follows from Theorem~\ref{theorem:star-BC-subtree} and the fact that every BC-subtree of a star $T$ contains some leaf of $T$.

From Theorem~\ref{theorem:path-BC-subtree} it is easy to see that
$$\eta_{BC}^ *  (P_n) = \eta_{BC} (P_n ) - \eta_{BC}  (P_{n - 2} ) = n-2 . $$

Now consider a tree $T$ of order $n$ and the two partites $X, Y$ of $T$ as a bipartite graph, denote by $n_X$ ($n_Y$) and $l_X$ ($l_Y$) the number of vertices (leaves) in $X$ and $Y$ respectively.
\begin{itemize}
\item If $l_X >0$ and $l_Y > 0$, let $w$ be a leaf in $X$, then the path connecting $w$ and any other vertex $u \in X - \{ w \}$ is a leaf-containing BC-subtree of $T$, yielding at least $(n_X - 1)$ of such subtrees. Similarly, there are at least $(n_Y - 1)$ leaf-containing BC-subtrees of $T$ formed by the paths connecting pairs of vertices in $Y$. Hence
$$ \eta_{BC}^*(T) \geq n_X - 1  + n_Y - 1 = n-2 $$
with equality if and only if $l_X = l_Y = 1$, in which case $T$ is a path.
\item Otherwise, assume without loss of generality that $l_X \geq 2$ and $l_Y = 0$. Let $n_k$ denote the number of vertices whose closest leaf is at distance $k$, i.e., $n_0=l_X$, $n_1$ is the number of internal vertices (in $Y$) adjacent to leaves. Then, by noting that all leaves are at even distance from each other, there is no edge between vertices counted by $n_k$ for any $k$. We have
$$
l_X = n_0 \geq n_1 \geq n_2 \geq n_3 \ldots \geq n_{s-1} \geq n_s = 1 \hbox{ for some $s$.}
$$
Note that $s$ is the largest distance between any internal vertex and leaf, thus $n_{s-1} > n_s$, implying that (for either odd or even $s$)
$$
n_X = n_0 + n_2 + n_4 + \ldots \geq 1 + n_1 + n_3 + n_5 + \ldots = 1 + n_Y .
$$
Consequently $n_X \geq \frac{n+1}{2}$. Considering the paths connecting a pair of vertices in $X$ with at least one of which being a leaf yields
$$ \eta_{BC}^* (T) = (n_X - l_X)l_X + {l_X \choose 2} = \left( n_X - \frac{l_X + 1}{2} \right) l_X \geq n-2, $$
with equality if and only if $l_X=2$ and $n_X = \frac{n+1}{2}$, in which case $T$ is a path of even length.
\end{itemize}
\end{proof}

\section{The BC-subtree-core of a tree}
\label{sec:mid}
In \cite{sze05}, the ``middle part'' of a tree $T$ with respect to the number of subtrees (i.e. the set of vertices contained in most subtrees) was defined as the {\em subtree-core} of $T$. It is shown that this set contains one or two adjacent vertices, a fact analogous to those of the {\em center} and {\em centroid} of a tree \cite{adam, jordan}. These facts were all proved by establishing the ``concavity'' or ``convexity'' of the corresponding counting function along any path of a tree. In terms of the number of subtrees, the following was shown in \cite{sze05}.

\begin{proposition}\label{prop:sze} (\cite{sze05})
For any three vertices $x, y, z$ of $T$ such that $xy, yz \in E(T)$, $y$ is contained in more subtrees than those containing $x$ or those containing $z$.
\end{proposition}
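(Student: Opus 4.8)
The plan is to track, for each vertex $v$, the quantity $\eta(T,v)$ of subtrees of $T$ containing $v$, and to exploit a clean behavior of this quantity across a single edge. First I would fix an edge $uv$ and delete it, splitting $T$ into the component $T_u$ containing $u$ and the component $T_v$ containing $v$; writing $a=\eta(T_u,u)$ and $b=\eta(T_v,v)$, a subtree of $T$ containing $u$ either avoids the edge $uv$ (and so is one of the $a$ subtrees of $T_u$ through $u$) or uses it (and so is an arbitrary subtree of $T_u$ through $u$ glued across $uv$ to an arbitrary subtree of $T_v$ through $v$, giving $ab$ of them). Hence $\eta(T,u)=a(1+b)$ and, symmetrically, $\eta(T,v)=b(1+a)$, so that
$$\eta(T,v)-\eta(T,u)=b-a.$$
This reduces each pairwise comparison to comparing the two ``rooted'' subtree counts on the two sides of the edge.

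Next I would apply this identity to the two edges $xy$ and $yz$ separately. Deleting $xy$ produces the branch $T_x$ (the side of $x$) and the remainder $Y_1$ (the side of $y$), while deleting $yz$ produces the branch $T_z$ and the remainder $Y_2$. Setting $a_x=\eta(T_x,x)$, $a_z=\eta(T_z,z)$, $b_x=\eta(Y_1,y)$, and $b_z=\eta(Y_2,y)$, the identity gives $\eta(T,y)-\eta(T,x)=b_x-a_x$ and $\eta(T,y)-\eta(T,z)=b_z-a_z$. The key structural point is that $Y_1$, the side of $y$ after deleting $xy$, still contains $z$ together with the entire branch $T_z$; hence its subtrees through $y$ include the singleton $\{y\}$ as well as every subtree obtained by attaching a subtree of $T_z$ through $z$ via the edge $yz$, so $b_x\ge 1+a_z$. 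By the same reasoning with the roles of $x$ and $z$ interchanged, $b_z\ge 1+a_x$.

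Finally I would read the assertion as the statement that $\eta(T,y)>\eta(T,x)$ or $\eta(T,y)>\eta(T,z)$ — equivalently, that $y$ cannot lie in at most as many subtrees as each of its two neighbors on the path — and argue by contradiction. If both inequalities failed we would have $a_x\ge b_x$ and $a_z\ge b_z$; combining with the two containment bounds gives $a_x\ge b_x\ge 1+a_z>a_z$ and $a_z\ge b_z\ge 1+a_x>a_x$, which is impossible. I expect the main obstacle to be precisely the cross-containment inequalities $b_x\ge 1+a_z$ and $b_z\ge 1+a_x$: the single-edge identity by itself only compares $y$ with one neighbor at a time, and what makes the two comparisons interact is the observation that the $y$-side of one edge always swallows the far branch of the other edge. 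Once this is in place the contradiction is immediate, and the same argument shows that $v\mapsto\eta(T,v)$ has no interior weak local minimum along any path, which is what ultimately forces the subtree-core (and, by analogy, the BC-subtree-core) to be connected.
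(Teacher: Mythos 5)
The paper does not prove this proposition; it quotes it from \cite{sze05}, so there is no internal proof to compare against. Your argument is correct and is essentially the standard one (indeed, the same decomposition used in \cite{sze05}): the edge-deletion identity $\eta(T,u)=a(1+b)$, $\eta(T,v)=b(1+a)$, hence $\eta(T,v)-\eta(T,u)=b-a$, is exactly the right tool, and your cross-containment inequalities $b_x\ge 1+a_z$ and $b_z\ge 1+a_x$ are precisely what make the two single-edge comparisons interact, yielding the contradiction $a_x\ge b_x\ge 1+a_z\ge 1+b_z\ge 2+a_x$. Two remarks. First, your decision to read the statement as $\eta(T,y)>\min\{\eta(T,x),\eta(T,z)\}$ is not just permissible but forced: the stronger reading in which $y$ strictly beats both neighbors is false (in $P_4$ the two central vertices lie in the same number of subtrees), while the min-version is exactly what is needed for the conclusion drawn from it, namely that the subtree-core consists of one vertex or two adjacent vertices. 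Second, your closing parenthetical --- that the same argument ``by analogy'' forces the BC-subtree-core to be connected --- is wrong, and the paper's Section~\ref{sec:mid} is devoted precisely to this failure: for the tree $T_0$ of Fig.~\ref{fig:bccorecounterexam} the BC-subtree-core is $\{x,y\}$, which is not a set of adjacent vertices, and Fig.~\ref{fig:alterbccorecounterexam} shows that the failure persists even when one only compares vertices at distance $2$ along a path. This aside does not affect the correctness of your proof of the proposition itself, but it should be deleted.
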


Inspired by the concept of subtree-core, we consider the {\em BC-subtree-core} of a tree $T$ as the set of vertices maximizing $\eta_{BC} (T,v)$ (recall that $\eta_{BC} (T,v)$ is the number of BC-subtrees of $T$ containing $v$). Unlike other ``middle parts'' of a tree, simple examination of a path shows that the BC-subtree-core is more complicated.

Let $P_n$ be a path with $V(P_n) = \{v_i|i = 1,2,\dots,n\}$, simple calculation shows
\begin{equation}
\eta_{BC} (P_n,v_i)=\begin{cases}
{\frac{i(n+1-i)}{2}-1}&i\equiv0(mod~2),\\
\lfloor\frac{i(n+1-i)-1}{2}\rfloor&i\equiv1(mod~2).
\end{cases}
\label{equ:oddeven}
\end{equation}
Consequently, one easily finds the BC-subtree-core of $P_n$ depending on $n$ modulo 4:
\begin{itemize}
   \item when $n=4k$, $\eta_{BC} (P_n,v_i)$ is maximized at $v_\frac{n}{2}$ and $v_\frac{(n+2)}{2}$ with the maximum value $\frac{(n^2+2n-8)}{8}$;
   \item when $n=4k+2$, $\eta_{BC} (P_n,v_i)$ is maximized at $v_\frac{n}{2}$ and $v_\frac{(n+2)}{2}$ with the maximum value $\lfloor\frac{(n^2+2n-4)}{8}\rfloor$ ;
   \item when $n=4k+1$, $\eta_{BC} (P_n,v_i)$ is maximized at $v_\frac{(n+1)}{2}$ with the maximum value  $\frac{(n^2+2n-3)}{8}$;
   \item when $n=4k+3$, $\eta_{BC} (P_n,v_i)$ is maximized at $v_\frac{(n-1)}{2}$, $v_\frac{(n+1)}{2}$ and $v_\frac{(n+3)}{2}$ with the maximum value $\frac{(n^2+2n-7)}{8}$.
 \end{itemize}

Noting that every BC-subtree of the star $K_{1,n-1}$ must contain the center, the following is obvious.
 \begin{proposition}
 The BC-subtree-core of the star $K_{1,n-1}(n>3)$ contains exactly the center vertex.
 \end{proposition}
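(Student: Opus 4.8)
The plan is to compute $\eta_{BC}(K_{1,n-1},v)$ explicitly for the two types of vertices---the center $u_0$ and an arbitrary leaf---and then to compare the two values directly.

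First I would invoke the structural fact already used in the proof of Theorem~\ref{theorem:star-BC-subtree}: any connected subtree of $K_{1,n-1}$ on at least two vertices must contain the center $u_0$, since deleting $u_0$ breaks the star into isolated leaves. Moreover such a subtree is itself a sub-star centered at $u_0$, and it is a BC-tree precisely when it has at least two leaves (two leaves sit at distance $2$, which is even, whereas a single edge is not a BC-tree). Consequently every BC-subtree of $K_{1,n-1}$ contains $u_0$, so $\eta_{BC}(K_{1,n-1},u_0)=\eta_{BC}(K_{1,n-1})=2^{n-1}-n$ by Theorem~\ref{theorem:star-BC-subtree}.

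Next I would count the BC-subtrees containing a fixed leaf $v$. Such a subtree must contain $u_0$ and $v$ together with at least one further leaf taken from the remaining $n-2$ leaves; the number of nonempty choices of these extra leaves is $2^{n-2}-1$. Hence $\eta_{BC}(K_{1,n-1},v)=2^{n-2}-1$ for every leaf $v$.

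Finally I would compare the two quantities. The center beats every leaf exactly when $2^{n-1}-n > 2^{n-2}-1$, i.e.\ when $2^{n-2} > n-1$, which holds for all $n\ge 4$; at $n=3$ one gets equality, consistent with $K_{1,2}=P_3$ whose entire vertex set forms the core, and this is precisely why the hypothesis $n>3$ is imposed. Since every vertex of the star is either the center or a leaf, for $n>3$ the vertex $u_0$ is the unique maximizer of $\eta_{BC}(K_{1,n-1},\cdot)$, so the BC-subtree-core equals $\{u_0\}$. There is no genuine obstacle here; the only mild care needed is the elementary inequality $2^{n-2}>n-1$, whose base case $n=4$ is immediate and beyond which the left-hand side dominates trivially.
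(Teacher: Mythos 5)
Your proposal is correct and rests on the same key observation as the paper, which simply notes that every BC-subtree of $K_{1,n-1}$ must contain the center and declares the result obvious. Your explicit count $2^{n-2}-1$ for the BC-subtrees through a fixed leaf, and the strict inequality $2^{n-1}-n > 2^{n-2}-1$ for $n>3$ (with equality at $n=3$, explaining the hypothesis), merely fill in the details the paper leaves implicit.
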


In general, the BC-subtree-core of a tree needs not contain only adjacent vertices (unlike in the cases of a path or star). In Fig. \ref{fig:bccorecounterexam}, simple calculations illustrate that $\eta_{BC} (T_0,v_i)=11$ $(i=1,2,3,4)$, $\eta_{BC} (T_0,u_i)=18$ $(i=1,2)$, $\eta_{BC} (T_0,x)=\eta_{BC} (T_0,y)=21$, $\eta_{BC} (T_0,z)=19$.

\begin{figure*}[htbp]
\centering
\includegraphics[width=0.45\textwidth]{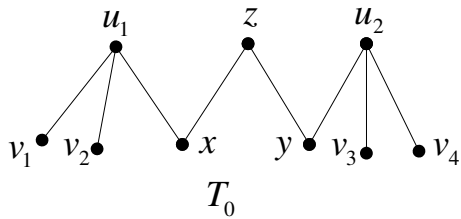}\\
\caption{The BC-subtree-core does not necessarily contain adjacent vertices.}
\label{fig:bccorecounterexam}
\end{figure*}

From observing both a path and Fig.~\ref{fig:bccorecounterexam}, it is obvious that one cannot hope for an analogue of Proposition~\ref{prop:sze}. With the special characteristics of BC-subtrees in mind, it is also interesting to ask the similar question for ``alternative'' vertices on a path. That is, for consecutive vertices $a,b,c,d,e$ on a path of a tree $T$ (i.e., $ab,bc,cd,de \in E(T)$), is $c$ necessarily contained in more BC-subtrees than $a$ and $e$?  Unfortunately, in Fig.~\ref{fig:alterbccorecounterexam}, simple calculation shows that $\eta_{BC} (T,v_i)=39$ $(i=1,2,3,4,5,6)$, $\eta_{BC} (T,u_i)=69$ $(i=1,2)$, $\eta_{BC} (T,z)=67$, $\eta_{BC} (T,x)=\eta_{BC} (T,y)=73$.
 \begin{figure*}[htbp]
\centering
\includegraphics[width=0.45\textwidth]{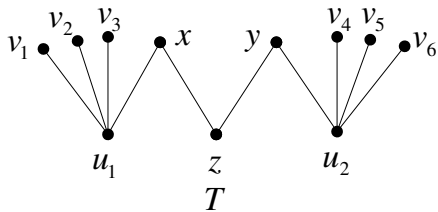}\\
\caption{No concavity exists even for alternative vertices on a path}
\label{fig:alterbccorecounterexam}
\end{figure*}

There are also many examples where the BC-subtree-core differs from the subtree-core, the center, and/or the centroid. Consider the path $T=P_3$ with vertex set $\{x,y,z\}$ where $x$ and $z$ are leaves. Then it is clear to see that the center, the centroid, and the subtree core is $\{y\}$, while the BC-subtree-core is $\{x,y,z\}$.

Another natural question is: must the subtree-core be a subset of the BC-subtree-core? Again, in Fig.~\ref{fig:bccorecounterexam}, we have $f_{T_0}(v_i)=17$ $(i=1,2,3,4)$, $f_{T_0}(u_i)=32$ $(i=1,2)$, $f_{T_0}(x)=f_{T_0}(y)=35$, $f_{T_0}(z)=36$, where $f_{T_0}(v)$ denote the number of subtrees of $T_0$ containing $v$. Therefore by definition the subtree-core of $T_0$ is $\{z\}$ while the BC-subtree-core is $\{x,y\}$.

\section{Concluding remarks}
\label{Sec:Conclusion}
Questions related to BC-trees an BC-subtrees are considered. Motivated by the study of graph core, algorithms are provided that construct BC-trees with any number of leaves when possible. Consequently, the sufficient and necessary conditions on the number of leaves for the existence of a BC-tree follow immediately. As analogous results to those on subtrees and distance-based graph invariants, the path and star are shown to be extremal with respect to the number of BC-subtrees. Although the results are similar in nature to the previous work, considering BC-subtrees turns out to be less trivial.

Regarding the number of subtrees, Yan and Yeh \cite{yan06} illustrated a linear-time algorithm to count the sum of weights of subtrees of $T$ by using the method of generating function. Considering the similar question for BC-subtrees would be interesting and natural.
Considered as a topological index or graph invariant, the number of BC-subtrees can be studied for other categories of trees such as trees with given degree sequences.

Based on the same concept, the ``middle part'' of a tree is defined and named as the BC-subtree-core, analogous to those on subtrees or distance-based invariants. A brief discussion is provided on this topic, where examples are presented showing that the BC-subtree-core behave in a rather different way than all previously known ``middle parts'' of a tree. Nevertheless, evidences suggest the following:
\begin{itemize}
\item Knowing that a BC-subtree-core does not necessarily contain adjacent vertices, must a BC-subtree-core (containing at least two vertices) contain vertices of distance at most 2?
\item Knowing that the subtree-core is not necessarily a subset of the BC-subtree-core, must they be adjacent (when they contain different vertices)?
\end{itemize}
The first question seems natural considering the special property of BC-subtrees; the second question is related to one that asks how far different ``middle parts'' can be.

Also inspired by the work on subtrees, the extremal trees with respect to the number of leaf-containing BC-subtrees are characterized. The similar concept on ``middle part'' can also be defined. It is not difficult to present similar elementary observations on the set of vertices that are contained in most leaf-containing BC-subtrees. We skip the details here.

\section{Acknowledgment}
This work is supported partly by
the Simons Foundation (Grant No. 245307),
the National Natural Science Foundation of China (Grant No. 61173 035),
and the Program for New Century Excellent Talents in University (Grant No. NCET-11-0861).

\end{document}